\def\dim{\mathop{\mathrm{dim}}}
\def\ker{\mathop{\mathrm{ker}}}
\begin{document}

\newcommand{\BigFig}[1]{\parbox{12pt}{\Huge #1}}
\newcommand{\BigZero}{\BigFig{0}}

\renewcommand{\topfraction}{0.99}
\renewcommand{\bottomfraction}{0.99}
\renewcommand{\dbltopfraction}{0.99}
\renewcommand{\textfraction}{0.1}
\renewcommand{\floatpagefraction}{0.99}
\renewcommand{\dblfloatpagefraction}{0.99}

 \newtheorem{theo}{Theorem}
  \newtheorem{lem}{Lemma}

\def\soutr#1{{\color{red}\sout{#1}}}
\def\red#1{{\color{red}#1}}

\def\U#1{{%
\def\O{\mbox{O}}
\def\u{\mbox{u}}
\mathcode`\u=\mu
\mathcode`\O=\Omega
\mathrm{#1}}}
\def\ii{{\mathrm{i}}}
\def\jj{\,\mathrm{j}}                   
\def\ee{{\mathrm{e}}}
\def\dd{{\mathrm{d}}}
\def\cc{{\mathrm{c.c.}}}
\def\Re{\mathop{\mathrm{Re}}}
\def\Im{\mathop{\mathrm{Im}}}
\def\bra#1{\langle #1|}
\def\braa#1{\langle\langle #1|}
\def\ket#1{|\mbox{$#1$}\rangle}
\def\kett#1{|\mbox{$#1$}\rangle\rangle}
\def\bracket#1{\langle\mbox{$#1$}\rangle}
\def\bracketi#1#2{\langle\mbox{$#1$}|\mbox{$#2$}\rangle}
\def\bracketii#1#2#3{\langle\mbox{$#1$}|\mbox{$#2$}|\mbox{$#3$}\rangle}
\def\vct#1{{\mathchoice{\mbox{\boldmath$#1$}}{\mbox{\boldmath$#1$}}%
  {\mbox{\scriptsize\boldmath$#1$}}{\mbox{\scriptsize\boldmath$#1$}}}}
\def\thrvct#1{\mathbf{#1}}
\def\fracpd#1#2{\frac{\partial#1}{\partial#2}}
\def\rank{\mathop{\mathrm{rank}}} 
\def\sub#1{_{\mbox{\scriptsize \rm #1}}}
\def\sur#1{^{\mbox{\scriptsize \rm #1}}}
\def\kagome{kagom\'{e} }
\def\Kagome{Kagom\'{e} }
\def\kagometype{kagom\'{e}-type }
\def\Kagometype{Kagom\'{e}-type }
\def\deg{\mathop{\mathrm{deg}}} 

\title{
Supersymmetric correspondence in spectra on a graph and its line graph: From circuit theory to spoof plasmons on metallic lattices 
}
\author{Yosuke~Nakata}
\email{y\_nakata@shinshu-u.ac.jp}
\affiliation{Center for Energy and Environmental Science, Shinshu University, 4-17-1 Wakasato, Nagano 380-8553, Japan}
\author{Yoshiro~Urade}
\affiliation{Department of Electronic Science and Engineering, Kyoto University, Kyoto 615-8510, Japan}
\author{Toshihiro~Nakanishi}
\affiliation{Department of Electronic Science and Engineering, Kyoto University, Kyoto 615-8510, Japan}
\author{Fumiaki~Miyamaru}
\affiliation{Center for Energy and Environmental Science, Shinshu University, 4-17-1 Wakasato, Nagano 380-8553, Japan}
\affiliation{Department of Physics, Faculty of Science, Shinshu University, 3-1-1 Asahi, Matsumoto, Nagano 390-8621, Japan}
\author{Mitsuo~Wada~Takeda}
\affiliation{Department of Physics, Faculty of Science, Shinshu University, 3-1-1 Asahi, Matsumoto, Nagano 390-8621, Japan}
\author{Masao~Kitano}
\affiliation{Department of Electronic Science and Engineering, Kyoto University, Kyoto 615-8510, Japan}

\date{\today}
\begin{abstract}
We investigate the supersymmetry (SUSY) structures for inductor-capacitor circuit networks on a simple regular graph and its line graph.
We show that their eigenspectra must coincide (except, possibly, for the highest eigenfrequency) due to SUSY, which is derived from the topological nature of the circuits.
To observe this spectra correspondence in the high frequency range, we study spoof plasmons on metallic hexagonal and \kagome lattices.
The band correspondence between them is predicted by a simulation.
Using terahertz time-domain spectroscopy, we demonstrate the band correspondence of fabricated metallic hexagonal and \kagome lattices.
\end{abstract}

\pacs{41.20.Jb, 42.25.Bs, 78.67.Pt, 11.30.Pb}
\maketitle

\section{Introduction}

Supersymmetry (SUSY) is a conjectured symmetry between fermions and bosons.
Although the concept of SUSY was introduced in high-energy physics and remains to be experimentally confirmed, the underlying algebra is also found in quantum mechanics.
When the SUSY algebra is applied to the field of quantum mechanics it is called supersymmetric quantum mechanics (SUSYQM) \cite{Cooper1994}.
The algebraic relations of SUSY link two systems that at first glance might seem to be very different.
The linkage through SUSY can be utilized to construct exact solutions for various systems in quantum mechanics.
Recently, SUSYQM has been applied to construct quantum systems enabling exotic quantum wave propagations:
reflectionless or invisible defects in tight-binding models \cite{Longhi2010} 
and complex crystals \cite{Longhi2013a},
transparent interface between two isospectral one-dimensional crystals \cite{Longhi2013},
reflectionless bent waveguides for matter-waves \cite{Campo2014}, 
and disordered systems with Bloch-like eigenstates and band gaps \cite{Yu2015}.

The SUSY structure was also found in other physics fields besides quantum mechanics, e.g., statistical physics through the Fokker-Planck equations \cite{Bernstein1984}.
Through the similarity between quantum-mechanical probability waves and electromagnetic waves, the SUSY structure can be formulated for electromagnetic systems.
Electromagnetic SUSY structures have been found in one-dimensional refractive index distributions \cite{Chumakov1994, Miri2013},
coupled discrete waveguides \cite{Longhi2010, Miri2013},
weakly guiding optical fibers with cylindrical symmetry \cite{Miri2013},
planar waveguides with varying permittivity and permeability \cite{Laba2014},
and non-uniform grating structures \cite{Longhi2015}.
Even a quantum optical deformed oscillator with $\mathrm{SU}(1,1)$ group symmetry and its SUSY partner were constructed as a classical electromagnetic system \cite{Zuniga-Segundo2014}.

The SUSY transformation generates new optical systems whose spectra coincide with those of the original system (except possibly for the highest eigenvalue of the fundamental mode of original or generated systems).
The SUSY transformations have been utilized to synthesize mode filters \cite{Miri2013} and distributed-feedback filters with any desired number of resonances at the target frequencies \cite{Longhi2015}.
The scattering properties of the optical systems paired by the SUSY transformation are related to each other \cite{Longhi2010, Miri2013}.
It is possible to design an optical system family with identical reflection and transmission characteristics by using the SUSY transformations \cite{Miri2014}.
A reflectionless potential derived from the trivial system by SUSY transformation was applied to design transparent optical intersections \cite{Longhi2015a}.
Moreover, SUSY has also been intensively investigated in non-Hermitian optical systems.
If a system is invariant under the simultaneous operations of
the space and time inversions, it is called $\mathcal{PT}$-symmetric.
The SUSY transformation for the $\mathcal{PT}$-symmetric system allows for arbitrarily removing bound states from the spectrum \cite{Miri2013a}.
In addition, non-Hermitian optical couplers can be designed \cite{Principe2015}.
By using double SUSY transformations, the bound states in the continuum were also formulated in tight-binding lattices \cite{Longhi2014,Longhi2014a} and continuous systems \cite{Correa2015}.
The SUSY transformation in the $\mathcal{PT}$-symmetric system can also reduce the undesired reflection of one-way-invisible optical crystals \cite{Midya2014}.

From an experimental perspective, it is still challenging to extract the full potential of electromagnetic SUSY because of fabrication difficulties.
However, using dielectric coupled waveguides, researchers have realized a reflectionless potential \cite{Szameit2011}, interpreted as a transformed potential derived from the trivial one by a SUSY transformation \cite{Longhi2010}, and SUSY mode converters \cite{Heinrich2014}.
The SUSY scattering properties of dielectric coupled waveguides have also been observed \cite{Heinrich2014a}.

As we have described so far, many studies have been done for the electromagnetic SUSY, but their focusing point is mainly limited to dielectric structures.
Recent progress of plasmonics \cite{AlexanderMaier2007} and metamaterials \cite{Solymar2009} using metals in optics demands further studies of SUSY for metallic systems.
To design and analyze the characteristics of metallic structures, intuitive electrical circuit models are very useful, because they extract the nature of the phenomena despite reducing the degree of freedom for the problem \cite{Nakata2012a}.
Actually, a circuit-theoretical design strategy called {\it metactronics} has been proposed even in the optical region \cite{Engheta2007} and the circuit theory for plasmons has also been developed \cite{Staffaroni2012}.
If we could design circuit models enabling exotic phenomena, they open up new possibilities for application to higher frequency ranges due to the scale invariance of Maxwell equations.
Thus, in this paper we develop how SUSY appears in inductor-capacitor circuit networks and demonstrate the SUSY correspondence in the high frequency region.
In particular, we focus on the SUSY structure for inductor-capacitor circuit networks on a graph and its line graph.

This article is organized as follows. 
In Sec.~\ref{sec:2}, we start by introducing the graph-theoretical concepts and formulate a general class of inductor-capacitor circuit network pairs related through SUSY, derived from the topological nature of the graphs representing the circuits. 
In Sec.~\ref{sec:3}, we theoretically and experimentally demonstrate the SUSY eigenfrequency correspondence for paired metallic lattices in the terahertz frequency range.
In Sec.~\ref{sec:4}, we summarize and conclude the paper.

\section{Theory \label{sec:2}}
\subsection{Eigenequation for inductor-capacitor circuit networks }

We consider an inductor-capacitor circuit network on a simple directed graph $G=(\mathcal{V},\mathcal{E})$, where $\mathcal{V}$ and $\mathcal{E}$ are the sets of vertices and directed edges, respectively.
The modifier {\it simple} means that there are no multiple edges between any vertex pair and no edge (loop) that connects a vertex to itself.
The number of the edges connected to a vertex $v$ of 
a graph is called the degree of $v$.
A {\it regular} graph is a graph whose
every vertex has the same degree.
We assume that $G$ is an $m$-{\it regular} graph with all vertices having degree $m$.
The capacitors, all with the same capacitance $C$, are connected between each vertex $v\in \mathcal{V}$ and the ground.
Coils, all with the same inductance $L$, are loaded along all $e \in \mathcal{E}$.
An example of $G$ and the inductor-capacitor circuit network on it are shown in Fig.~\ref{fig:lc_ladder}(a) and (b).
\begin{figure}[bp]
\includegraphics[width=86mm]{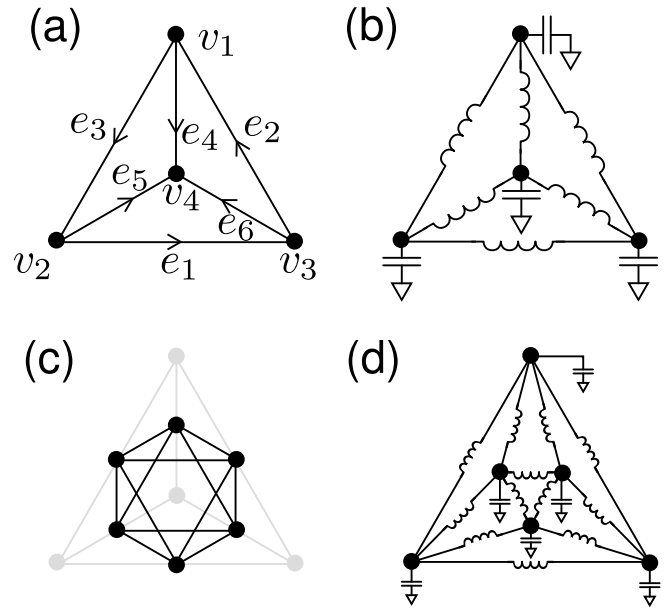}
\caption{\label{fig:lc_ladder} (a) Example of simple $3$-regular directed graph. (b) Inductor-capacitor circuit network on the graph. (c) Line graph of the graph shown in (a). (d) Inductor-capacitor circuit network on the line graph (c).
}
\end{figure}
For $v\in \mathcal{V}$ and $e\in \mathcal{E}$, the incidence matrix $\mathsf{X}=[X_{ve}]$ of a directed graph $G$ is defined as follows:
$X_{ve}=-1$ ($e$ enters $v$), $X_{ve}=1$ ($e$ leaves $v$), otherwise $X_{ve}=0$.

Using vector notation, we represent the current distribution $J_e$ flowing along $e\in \mathcal{E}$ as a column vector $\vct{J}=[J_e]^\mathrm{T}$.
The charge distribution is denoted by $\vct{q}=[q_v]^\mathrm{T}$ with a stored charge $q_v$ at $v\in \mathcal{V}$.
The charge conservation law is given by
\begin{equation}
\dot{\vct{q}}=-\mathsf{X} \vct{J},  \label{eq:1}
\end{equation}
where the time derivative is represented by the dot.
The scalar potential $\Phi_v$ at $v \in {\mathcal V}$ must satisfy Faraday's law of induction, so we have
\begin{equation}
\dot{\vct{J}}=\frac{1}{L} \mathsf{X}^\mathrm{T}\vct{\Phi} ,  \label{eq:2}
\end{equation}
with $\vct{\Phi}=[\Phi_v]^\mathrm{T}$.
The scalar potential $\vct{\Phi}$ is written as
\begin{equation}
 \vct{\Phi}=\mathsf{P} \vct{q},  \label{eq:3}
\end{equation}
with a potential matrix $\mathsf{P}$.
In our case, $\mathsf{P}$ is given by
\begin{equation}
 \mathsf{P}=C^{-1}\mathsf{I},  \label{eq:4}
\end{equation}
where we use the identity matrix $\mathsf{I}$.

From Eqs.~(\ref{eq:1})--(\ref{eq:4}), we obtain
\begin{equation}
 \ddot{\vct{q}}=-{\omega_0}^2\mathsf{X}\mathsf{X}^\mathrm{T} \vct{q}, \nonumber
\end{equation}
with $\omega_0=1/\sqrt{LC}$.
Assuming $\vct{q}=\tilde{\vct{q}}\exp(-\ii \omega t)+\cc$, we have an eigenequation
\begin{equation}
 \mathsf{L}\tilde{\vct{q}}= \left(\frac{\omega}{\omega_0}\right)^2\tilde{\vct{q}}  \label{eq:5}
\end{equation}
with the Laplacian $\mathsf{L}=\mathsf{X}\mathsf{X}^\mathrm{T}$.
We introduce an adjacency matrix $\mathsf{A}=[A_{vw}]$, where $A_{vw}$ is 1 if $v, w\in \mathcal{V}$ are connected by an edge, otherwise 0.
From a direct calculation, we can write $\mathsf{L}$ by $\mathsf{A}$ as 
\begin{equation}
\mathsf{L}=\mathsf{X}\mathsf{X}^\mathrm{T}=-\mathsf{A}+m\mathsf{I},  \label{eq:6}
\end{equation}
where $m$ is the degree of the vertex of $G$.
Note that $\mathsf{L}$ is independent of the direction of the edges in $G$ because $\mathsf{L}$ is expressed in terms of $\mathsf{A}$ and $\mathsf{I}$.

The directed graph $G$ can be also regarded as an undirected graph.
For $e\in \mathcal{E}$, we can make an undirected edge $\bar{e}$, where the bar operator ignores the direction of the edge.
Then, we have $\bar{G}=(\mathcal{V},\bar{\mathcal{E}})$ with $\bar{\mathcal{E}}=\{\bar{e}|e\in \mathcal{E}\}$.
We can also define the undirected incidence matrix $\bar{\mathsf{X}}=[\bar{X}_{ve}]$ as $\bar{X}_{ve}=1$ ($e$ and $v$ are connected), otherwise $\bar{X}_{ve}=0$.
Using $\bar{\mathsf{X}}$, $\mathsf{A}$ is written as follows \cite{Biggs1994}:
\begin{equation}
 \bar{\mathsf{X}}\bar{\mathsf{X}}^\mathrm{T}=\mathsf{A}+m \mathsf{I}.  \label{eq:7}
\end{equation}
From Eqs.~(\ref{eq:6}) and (\ref{eq:7}), we obtain
\begin{equation}
 \mathsf{L}=- \bar{\mathsf{X}}\bar{\mathsf{X}}^\mathrm{T}+2m\mathsf{I}.  \label{eq:8}
\end{equation}

\subsection{SUSY correspondence in spectra on a simple regular graph and its line graph}
Next, we introduce the line graph concept \cite{Biggs1994}.
The line graph $L(G)=(\mathcal{V}\sub{L}, \mathcal{E}\sub{L})$ of a directed graph $G$ is constructed as follows. 
Each edge in $G$ is considered to be a vertex of $L(G)$. 
Two vertices of $L(G)$ are connected if the corresponding edges in $G$ have a vertex in common.
There are two possible choices for the direction of each edge in $L(G)$ and we adapt one of them. 
From here on, we only consider $L(G)$ of a simple $m$-regular graph $G$.
In this case, the line graph $L(G)$ is a simple $m\sub{L}$-regular graph.
The degree $m\sub{L}$ can be represented by $m$.
For a vertex $v \in \mathcal{V}$ included in $e\in \mathcal{E}$, there are $m-1$ edges $e'\in \mathcal{E}\setminus\{e \}$ connected to $v$.
Then, we obtain
\begin{equation}
 m\sub{L}=2(m-1).  \label{eq:9}
\end{equation}
Note that $(m/2)\#\mathcal{V}= \#\mathcal{E}=\#\mathcal{V}\sub{L}=(2/m\sub{L})\#\mathcal{E}\sub{L}$ is satisfied for a finite graph $G$, where $\#\mathcal{S}$ represents the numbers of the elements of the set $\mathcal{S}$.
Figure~\ref{fig:lc_ladder}(c) is an example of the line graph of the graph $G$ shown in Fig.~\ref{fig:lc_ladder}(a).
Figure~\ref{fig:lc_ladder}(d) is the inductor-capacitor circuit network on $L(G)$.

In the context of mathematics, it is known that the spectra of the Laplacians for the graph and its line graph are related to each other
\cite{Shirai2000}.  
For the convenience of the readers, we rederive this property in a simple manner and apply it to the inductor-capacitor circuit networks.
The Laplacian of $L(G)$ is written as
\begin{equation}
 \mathsf{L}\sub{L}=\mathsf{X}\sub{L}{\mathsf{X}\sub{L}}^\mathrm{T}=-\mathsf{A}\sub{L}+m\sub{L}\mathsf{I}\sub{L},  \label{eq:10}
\end{equation}
with the identity matrix $\mathsf{I}\sub{L}$, the incidence matrix $\mathsf{X}\sub{L}$, and the adjacency matrix $\mathsf{A}\sub{L}$ for $L(G)$. 
The adjacency matrix of $L(G)$ is represented as follows \cite{Biggs1994}:
\begin{equation}
 \mathsf{A}\sub{L}=\bar{\mathsf{X}}^\mathrm{T} \bar{\mathsf{X}}-2\mathsf{I}\sub{L}.  \label{eq:11}
\end{equation}
From Eqs.~(\ref{eq:10}) and (\ref{eq:11}), we have
\begin{equation}
 \mathsf{L}\sub{L}=-\bar{\mathsf{X}}^\mathrm{T} \bar{\mathsf{X}}+(m\sub{L}+2)\mathsf{I}\sub{L}.
  \label{eq:12}
\end{equation}
Now, we consider the composite system of $L(G)$ and $G$.
Then the composite Laplacian $\mathcal{L}\sub{c}$ is given by
\begin{equation}
\mathcal{L}\sub{c}=-\mathcal{K}\sub{c} +2m\mathcal{I}\sub{c},  \label{eq:13}
\end{equation}
with 
\begin{equation}
  \mathcal{K}\sub{c}=
    \begin{bmatrix}
   \bar{\mathsf{X}}^\mathrm{T}\bar{\mathsf{X}}&0\\
   0&\bar{\mathsf{X}}\bar{\mathsf{X}}^\mathrm{T}
  \end{bmatrix},\
  \mathcal{I}\sub{c}=
    \begin{bmatrix}
   \mathsf{I}\sub{L}&0\\
   0&\mathsf{I}
  \end{bmatrix},
\nonumber
\end{equation}
where we have used Eqs.~(\ref{eq:8}), (\ref{eq:9}), and (\ref{eq:12}).
The composite operator $\mathcal{K}\sub{c}$ is
written as
\begin{equation}
\mathcal{K}\sub{c}=\mathcal{Q}\mathcal{Q}^\dagger+\mathcal{Q}^\dagger\mathcal{Q},  \label{eq:14}
\end{equation}
where the symbol $\dagger$ represents the Hermitian conjugate, and we define the supercharge as 
$$
\mathcal{Q} =
\begin{bmatrix}
 0 & 0\\
 \bar{\mathsf{X}} & 0
\end{bmatrix}.
$$
These operators satisfy the superalgebra \cite{Cooper1994}:
$$[\mathcal{K}\sub{c},\mathcal{Q}]=[\mathcal{K}\sub{c},\mathcal{Q}^\dagger]=0,$$
$$\{\mathcal{Q},\mathcal{Q}^\dagger\}=\mathcal{K}\sub{c},\ \{\mathcal{Q},\mathcal{Q}\}=\{\mathcal{Q}^\dagger,\mathcal{Q}^\dagger\}=0,
$$
where $\{\mathcal{A},\mathcal{B}\}$ and $[\mathcal{A},\mathcal{B}]$
are the anticommutator and the commutator, respectively.
Therefore, the eigenspectra of the inductor-capacitor circuit networks on the simple regular graph and its line graph must coincide except, possibly, for the highest eigenfrequency. 
Actually, if we have eigenvector $\vct{x}$ satisfying $\mathsf{L}\vct{x}=E\vct{x}$ with the eigenvalue $E$, we obtain $\bar{\mathsf{X}}^\mathrm{T}\vct{x}$, satisfying $\mathsf{L}\sub{L}(\bar{\mathsf{X}}^\mathrm{T}\vct{x})=E(\bar{\mathsf{X}}^\mathrm{T}\vct{x})$.
Then, we have an eigenvector $\bar{\mathsf{X}}^\mathrm{T}\vct{x}$  for $\mathsf{L}\sub{L}$ when 
$\bar{\mathsf{X}}^\mathrm{T}\vct{x}\ne \vct{0}$ ($E\ne 2m$).
The eigenvalue $E$ of $\mathsf{L}$ and $\mathsf{L}\sub{L}$
must satisfy $E\leq 2m$, because
$\bar{\mathsf{X}}\bar{\mathsf{X}}^\mathrm{T}$ and $\bar{\mathsf{X}}^\mathrm{T}\bar{\mathsf{X}}$ are positive-semidefinite.
For an eigenvector $\vct{y}_{2m}$ of $\mathsf{L}$ ($\mathsf{L}\sub{L}$) with eigenvalue $E=2m$,
the partner mode cannot be obtained by multiplying $\bar{\mathsf{X}}^\mathrm{T}$ ($\bar{\mathsf{X}}$), because of $\bar{\mathsf{X}}^\mathrm{T}\vct{y}_{2m}=0$ ($\bar{\mathsf{X}}\vct{y}_{2m}=0$).
The condition for complete spectral coincidence of the eigenvalues of $\mathsf{L}$ and $\mathsf{L}\sub{L}$ is discussed in Appendix~\ref{sec:appA}.
Note that quantum tight-binding models represented by Eqs.~(\ref{eq:8}) and (\ref{eq:12}) are isospectral except, possibly, for the highest eigenenergy, but an accurate tuning of the on-site potential satisfying Eqs.~(\ref{eq:8}) and (\ref{eq:12}) is usually difficult to achieve. 
The significant point of SUSY for the inductor-capacitor circuit networks is that the on-site potential tuning is accomplished naturally.

If $G$ is a periodic graph with lattice vectors $\{\thrvct{a}_i\}$, we can show that
the spectral coincidence except possibly for the highest eigenfrequency holds for {\it each wave vector. }
We define parallel translation with $\thrvct{a}_i$ as
$\mathsf{P}^\mathcal{E}_i$ and $\mathsf{P}^\mathcal{V}_i$
for edges and vertices, respectively.
From the translational symmetry,
we have $\bar{\mathsf{X}}\mathsf{P}^\mathcal{E}_i=\mathsf{P}^\mathcal{V}_i\bar{\mathsf{X}}$. For a Bloch vector $\vct{x}_{\thrvct{k}}$ satisfying $\mathsf{P}^\mathcal{E}_i \vct{x}_{\thrvct{k}}=\exp({\ii\thrvct{k}\cdot\thrvct{a}_i}) \vct{x}_{\thrvct{k}}$, we have $\mathsf{P}^\mathcal{V}_i(\bar{\mathsf{X}}\vct{x}_{\thrvct{k}})=\exp(\ii\thrvct{k}\cdot\thrvct{a}_i)(\bar{\mathsf{X}}\vct{x}_{\thrvct{k}})$. This means that
$\bar{\mathsf{X}}\vct{x}_{\thrvct{k}}$ is also a Bloch vector. Similar discussion can be applied to
$\bar{\mathsf{X}}^\mathrm{T}$. Then, $\bar{\mathsf{X}}$ and
$\bar{\mathsf{X}}^\mathrm{T}$ map Bloch vectors to Bloch vectors without changing $\thrvct{k}$.
Therefore, the decomposition shown in Eq.~(\ref{eq:13}) is valid in
the subspace of Bloch vectors with a wave vector $\thrvct{k}$.
This means that the spectral coincidence except, possibly, for the highest eigenfrequency
holds for each wave vector.

 \subsection{Examples \label{subsec:c}}
\subsubsection{finite case\label{subsub:1}}
For the graph shown in Fig.~\ref{fig:lc_ladder}(a), we have
\begin{equation}
\bar{\mathsf{X}}=\begin{bmatrix}
  0 &1 &1 &1 &0 &0\\
  1 &0 &1 &0 &1 &0\\
  1 &1 &0 &0 &0 &1\\  
  0 &0 &0 &1 &1 &1
		 \end{bmatrix}\nonumber.  
\end{equation}
Then, we get $\omega=2\omega_0,\ 2\omega_0,\ 2\omega_0,\ 0$ for the inductor-capacitor circuit network on $G$.
On the other hand, $\omega=\sqrt{6}\omega_0,\ \sqrt{6}\omega_0,\ 2\omega_0,\ 2\omega_0,\ 2\omega_0,\ 0$ are obtained for the inductor-capacitor circuit network on $L(G)$.
We can see that all angular eigenfrequencies for $G$ are included in those for $L(G)$.

\subsubsection{Infinite case}

\begin{figure}[!t]
\includegraphics[width=86mm]{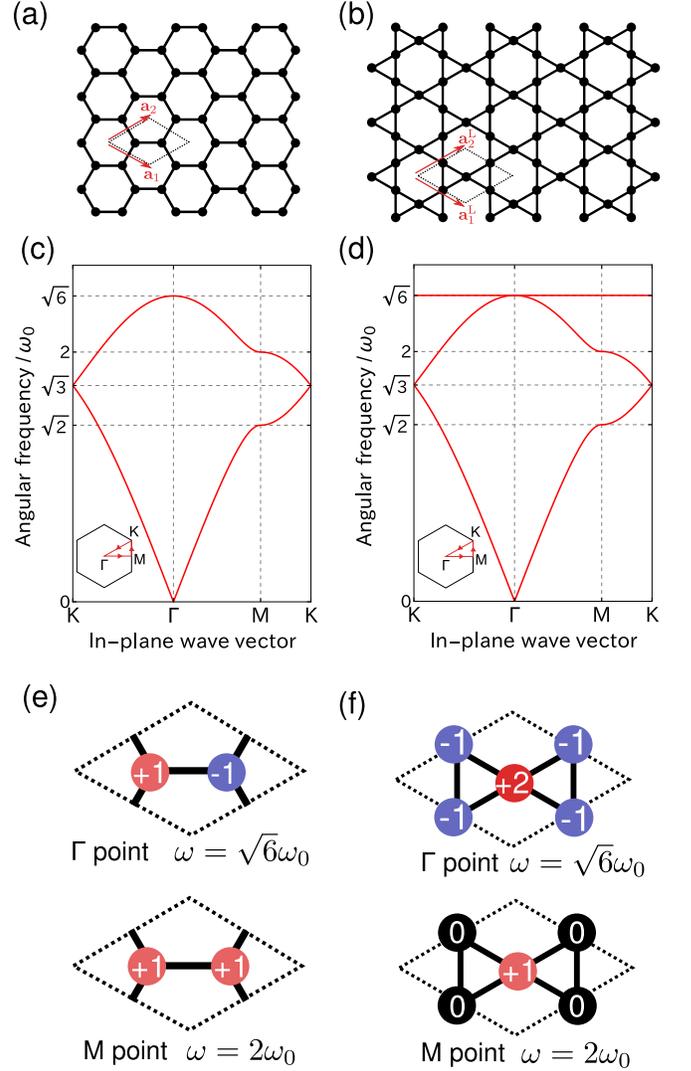}
\caption{\label{fig:kagome_hex}  (color online) (a) Hexagonal lattice. (b) \Kagome lattice. (c) Dispersion relation for a hexagonal inductor-capacitor circuit network. (d) Dispersion relation for a \kagome inductor-capacitor circuit network. (e) Eigenmodes of the higher band at the $\Gamma$ and $\mathrm{M}$ points for the hexagonal lattice. (f) Eigenmodes of the middle band at the $\Gamma$ and $\mathrm{M}$ points for the \kagome lattice.
 }
\end{figure}

Here, we consider a hexagonal lattice as $G$ [Fig.~\ref{fig:kagome_hex}(a)].
The line graph $L(G)$ is a \kagome lattice [Fig.~\ref{fig:kagome_hex}(b)].
To see the spectra coincidence directly, we calculate the angular eigenfrequencies $\omega$.
At first, we calculate the eigenvalue $\alpha$ and $\alpha\sub{L}$ for $\mathsf{A}$ and $\mathsf{A}\sub{L}$, respectively.
Due to the Bloch theorem, it is enough to calculate them in the restricted space $\mathcal{W}_{\thrvct{k}_\parallel}$ of waves with wave vector $\thrvct{k}_\parallel$.
For the hexagonal lattice, we have two vertices $v_p\in \mathcal{V}$ $ (p=1,2)$ in a unit cell.
The vertex displaced from $v_p$, with $i\thrvct{a}_1+j\thrvct{a}_2$, is denoted by $v_p^{(i,j)}$ for $(i,j)\in \mathbb{Z}^2$, where $\thrvct{a}_1$ and $\thrvct{a}_2$ are lattice vectors for $G$.
Now, we define 
$\vct{\Psi}_1 (\thrvct{k}_\parallel)=\sum_{(i,j)\in\mathbb{Z}^2} \exp(\ii \thrvct{k}_\parallel\cdot(i\thrvct{a}_1+j\thrvct{a}_2) ){\vct{v}_1^{(i,j)}}$ and $\vct{\Psi}_2 (\thrvct{k}_\parallel)=\sum_{(i,j)\in\mathbb{Z}^2} \exp(\ii \thrvct{k}_\parallel\cdot(i\thrvct{a}_1+j\thrvct{a}_2) ){\vct{v}_2^{(i,j)}}$, where $\{\vct{v}_{p}^{(i,j)}|(i,j)\in\mathbb{Z}^2, p=1,2\}\subset \mathcal{H}$ is a complete orthogonal basis of the Hilbert space $\mathcal{H}$. 
The vector subspace $\mathcal{W}_{\thrvct{k}_\parallel}$ is spanned by $\vct{\Psi}_1 (\thrvct{k}_\parallel)$ and $\vct{\Psi}_2 (\thrvct{k}_\parallel)$.
The  action of $\mathsf{A}$ in the restricted space $\mathcal{W}_{\thrvct{k}_\parallel}$ is represented by a $2\times 2$ matrix $\mathsf{A}({\thrvct{k}_\parallel})=[A_{ij}({\thrvct{k}_\parallel})]$,  satisfying
$\mathsf{A}\vct{\Psi}_i({\thrvct{k}_\parallel})= \sum_{j=1}^{2}\vct{\Psi}_j({\thrvct{k}_\parallel}) A_{ji}({\thrvct{k}_\parallel})$.
Diagonalizing $\mathsf{A}({\thrvct{k}_\parallel})$ we have
\begin{equation}
\alpha(\thrvct{k}_\parallel)=\pm\sqrt{3+2F(\thrvct{a}_1, \thrvct{a}_2;\thrvct{k}_\parallel)},  \label{eq:15}
\end{equation}
with $F(\thrvct{u}_1, \thrvct{u}_2; \thrvct{k}_\parallel)=\cos (\thrvct{k}_\parallel\cdot\thrvct{u}_1)+\cos( \thrvct{k}_\parallel\cdot\thrvct{u}_2)+\cos \thrvct{k}_\parallel\cdot(\thrvct{u}_1-\thrvct{u}_2)$.
By applying a similar calculation to the \kagome lattice, we obtain
\begin{equation}
\alpha\sub{L}(\thrvct{k}_\parallel) =-2, 1\pm\sqrt{3+2F(\thrvct{a}_1\sur{L}, \thrvct{a}_2\sur{L};\thrvct{k}_\parallel)},  \label{eq:16}
\end{equation}
with lattice vectors $\thrvct{a}_1\sur{L}$ and $\thrvct{a}_2\sur{L}$ for $L(G)$.
Using Eqs.~(\ref{eq:5}), (\ref{eq:6}), and (\ref{eq:15}), we obtain
\begin{equation}
\frac{\omega}{\omega_0}=\sqrt{3\pm\sqrt{3+2F(\thrvct{a}_1, \thrvct{a}_2;\thrvct{k}_\parallel)}}  \label{eq:17}
\end{equation}
for the hexagonal lattice.
From Eqs.~(\ref{eq:5}), (\ref{eq:10}), and (\ref{eq:16}), the \kagome lattice also has the dispersion relation
\begin{equation}
\frac{\omega}{\omega_0}=\sqrt{6},  \sqrt{3\pm\sqrt{3+2F(\thrvct{a}_1\sur{L}, \thrvct{a}_2\sur{L};\thrvct{k}_\parallel)}}.  \label{eq:18}
\end{equation}
The obtained dispersion relations are shown in Figs.~\ref{fig:kagome_hex}(c) and (d).
The lower two bands are identical as we expected.
Note that these bands are determined only by the product $LC$ and are independent of the ratio $L/C$.
We also show examples of the eigenmodes for the hexagonal and \kagome lattices in Figs.~\ref{fig:kagome_hex}(e) and (f).

\section{Band correspondence between metallic hexagonal and \kagome lattices \label{sec:3}}

In the previous section, we developed inductor-capacitor circuit networks that are related through SUSY. 
As an example, we saw that the bands of hexagonal and \kagome inductor-capacitor circuit networks are isospectral by SUSY, except for the highest band of the \kagome lattice.
In this section, we examine this correspondence for realistic system.
It is known that bar-disk resonators composed of metallic disks connected by metallic bars can be qualitatively modeled by the inductor-capacitor circuit networks discussed in Sec.~\ref{sec:2}, because charges on the disks are coupled dominantly by the current flowing along the bars \cite{Nakata2012,Kajiwara2016}.
The modes on the bar-disk resonators are called spoof plasmons.
Here, we study the spoof plasmons of metallic hexagonal and \kagome lattices whose designs are shown in Fig.~\ref{fig:kagome_hex_lattice_design}.

\begin{figure}[!tp]
 \includegraphics[width=86mm]{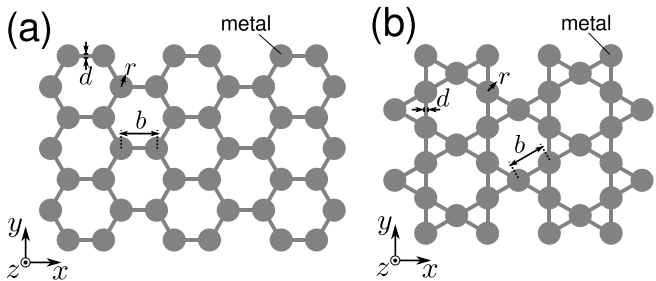}
\caption{\label{fig:kagome_hex_lattice_design} Designs for (a) metallic hexagonal lattice and (b) metallic \kagome lattice. 
	The following parameters are used: $d=10\,\U{\mu m}$, $r=150\,\U{\mu m}$, $b=800/\sqrt{3}\,\U{\mu m}$, and thickness $h=30\,\U{\mu m}$.}
\end{figure}

\subsection{Simulation\label{sec:3-1}}

\begin{figure}[btp]
\includegraphics[width=86mm]{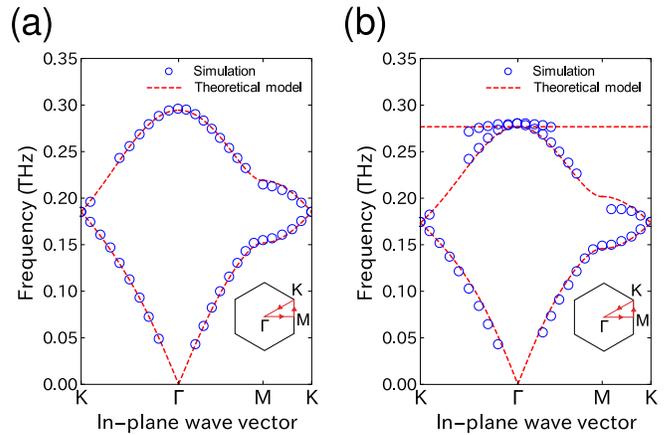}
\caption{\label{fig:hex_kagome_bands} (color online) Dispersion relations obtained by simulation for a 
(a) metallic hexagonal lattice and
(b) metallic \kagome lattice.
The fitting parameters for the theoretical models discussed in Appendix~\ref{sec:appB} are given by $\omega_0\sur{hex}=2\pi \times 0.107\,\U{THz}$, $\eta\sur{hex}_0=0.0916$ for the hexagonal lattice and $\omega_0\sur{kag}=2\pi\times 0.101\,\U{THz}$ and $\eta\sur{kag}_0=0.142$ for the \kagome lattice. 
}
\end{figure}
\begin{figure*}[hbtp]
\includegraphics[width=172mm]{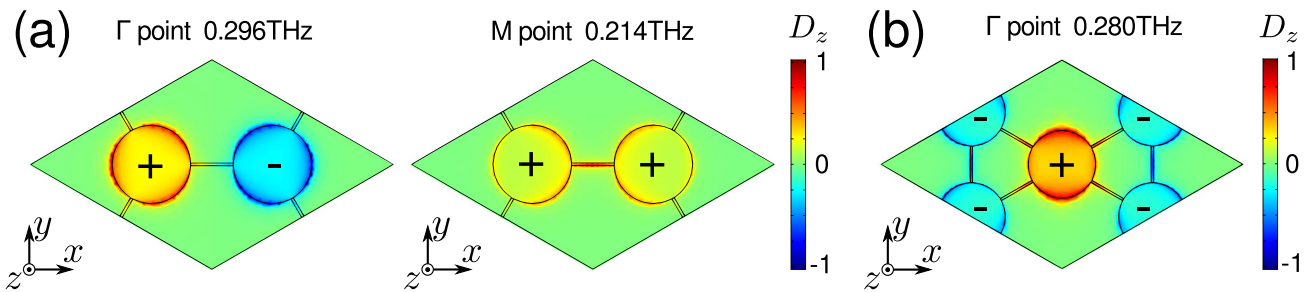}
\caption{\label{fig:eigenmodes} (color online) Electric flux density $D_z$ on $z=h/2$:
(a) eigenmodes of the higher band at the $\Gamma$ and $\mathrm{M}$ points for the metallic hexagonal lattice and (b) eigenmode of the middle band at the $\Gamma$ point for the metallic \kagome lattice. 
}
\end{figure*}

We perform an eigenfrequency analysis for the metallic hexagonal and \kagome lattices by the finite element method solver (\textsc{Comsol Multiphysics}).
The parameters of the structures for Fig.~\ref{fig:kagome_hex_lattice_design} are as follows:
bar width $d=10\,\U{\mu m}$, radius of disks $r=150\,\U{\mu m}$, distance between nearest disks $b=800/\sqrt{3}\,\U{\mu m}$, and thickness $h=30\,\U{\mu m}$.
In each simulation, the finite thickness metallic lattice parallel to $z=0$ is located in $z\in [-h/2,h/2]$.
The unit cell in the $xy$ plane is the rhombus spanned by the lattice vectors and denoted by $U$.
To reduce the degrees of freedom, we use the mirror symmetry with respect to $z=0$. 
A simulation domain with the material parameters of a vacuum is set in $U\times [0,6l]$ with $l=\sqrt{3} b=800\,\U{\mu m}$, and a perfect magnetic conductor condition imposed on the surface $z=0$.
Half of the structure in $z\in [0,h/2]$ is engraved in the simulation domain and a perfect electric  conductor (PEC) boundary condition is imposed on the structure surface.
A perfect matched layer (PML) in $U\times [5l, 6l]$ with a PEC boundary at $z=6l$ is used to truncate the infinite effect.
The periodic boundary condition with a phase shift (Floquet boundary conditions with a wave vector $\thrvct{k}_\parallel$)  is applied to $\partial U \times [0,6l]$.
Changing $\thrvct{k}_\parallel$ along the Brillouin zone boundary, we calculate the eigenfrequencies.
To remove the  modes which are not localized near the metallic surface \cite{Parisi2012}, we select the modes with 
\begin{equation}
 \xi=\frac{\int_{U\times [9l/2, 6l]} |\tilde{\thrvct{E}}|^2 \dd V}{\int_{U\times [h/2, l/2]} |\tilde{\thrvct{E}}|^2 \dd V}<1,\nonumber 
\end{equation}
where the complex amplitude of the electric field of the mode is denoted by $\tilde{\thrvct{E}}$.

The calculated eigenfrequencies for the metallic hexagonal and \kagome lattices are shown in Fig.~\ref{fig:hex_kagome_bands} as circles.
Note that some points are missing because unphysical modes located near PML accidentally exist or couple with the modes. 
As we explained earlier, we eliminated such modes with $\xi\geq 1$.
In Fig.~\ref{fig:hex_kagome_bands}, we observe the lower two band correspondence between the metallic hexagonal and \kagome lattices.
The bands of the metallic hexagonal lattice are about 5\% higher than those of the \kagome lattice.
However, we can say that the band correspondence is qualitatively established.
The detailed theoretical models for fitting curves are discussed later in Appendix.~\ref{sec:appB}.
Figure~\ref{fig:eigenmodes} shows the electric flux density $D_z$ on $z=h/2$ of the  specific modes, where $D_z$ corresponds to the surface charge on the metal.
These mode profiles agree with the theoretically calculated eigenmodes shown in Figs.~\ref{fig:kagome_hex}(e) and (f).

\begin{figure}[!b]
 \includegraphics[width=86mm]{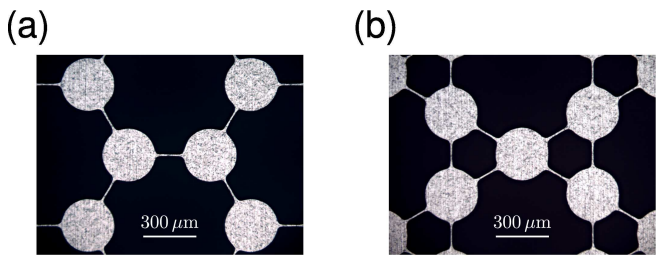}
\caption{\label{fig:hdbr_kdbr} (color online) Microphotographs of the (a) metallic hexagonal and (b) metallic \kagome lattices.}
\end{figure}

\subsection{Experiment}

\begin{figure*}[tbp]
\includegraphics[width=172mm]{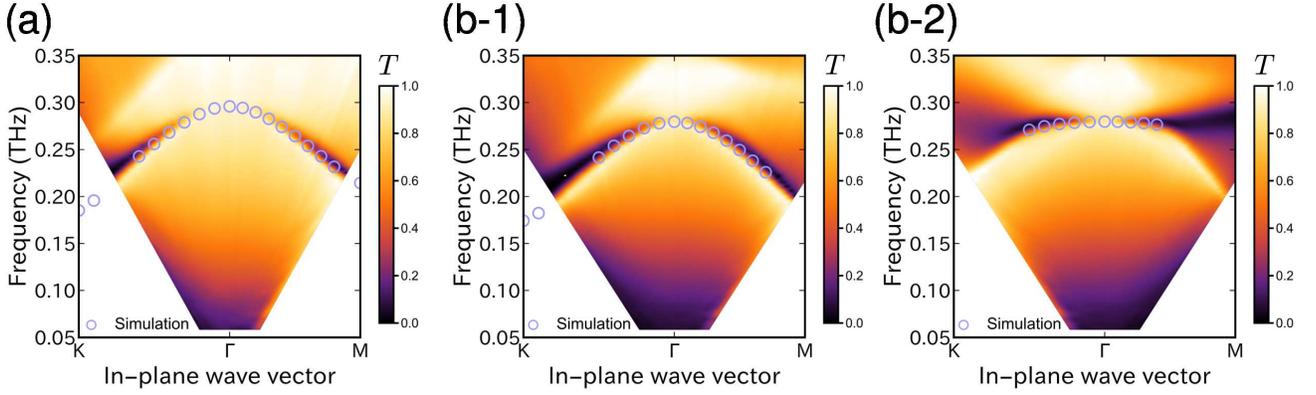}
\caption{\label{fig:transmission} (color online) Power transmission spectra mapped on the wave-vector-frequency plane for a (a) metallic hexagonal lattice and (b-1, b-2) metallic \kagome lattice.
For (a) and (b-1), the incident waves are set as transverse electric modes in the $\Gamma$--$\mathrm{K}$ scan and transverse magnetic modes in the $\Gamma$--$\mathrm{M}$ scan.
On the other hand, in (b-2), we use transverse magnetic excitation in the $\Gamma$--$\mathrm{K}$ scan and transverse electric excitation for the $\Gamma$--$\mathrm{M}$ scan.
The calculated eigenfrequencies are depicted as circles.}
\end{figure*}

To investigate the dispersion relation experimentally, we fabricated the metallic hexagonal and \kagome lattices by etching and performed transmission measurement on them using the terahertz time-domain spectroscopy technique.
The samples made of stainless steel (SUS304) are shown in Fig.~\ref{fig:hdbr_kdbr}.
The geometrical parameters of these samples are the same as those for the simulation model in Sec.~\ref{sec:3-1}.
The area where structures are patterned is $4\,\U{cm}\times 4\,\U{cm}$.
The terahertz beam is generated by a spiral antenna and collimated by a combination of a hyper-hemispherical silicon lens and a 
Tsurupica$^\text{\textregistered}$ lens.
The beam diameter is set to $13\, \U{mm}$ by an aperture.
Wire-grid polarizers are located near the emitter and detector, which are adjusted so that the emitted and detected fields have the same linear polarization.
The transmission spectrum $T(\omega)$ in the frequency domain is obtained from $T(\omega)=|\tilde{E}(\omega)/\tilde{E}\sub{ref}(\omega)|^2$, where
$\tilde{E}(\omega)$ and $\tilde{E}\sub{ref}(\omega)$ are Fourier transformed electric fields with and without the sample.
To scan the Brillouin zone, power transmission spectra are measured with changing incident angle $\theta$ from $\theta=0^\circ$ to $60^\circ$ with step $2.5^\circ$.
Here, the magnitude of the in-plane wave vector $\thrvct{k}_\parallel$ is given by $k_\parallel = (\omega/c) \sin \theta$, where $c$ is the speed of light.

To observe the higher band of the hexagonal lattice and the middle band of the \kagome lattice, the incident waves are set as follows:
(i) transverse electric (TE) modes in $\Gamma$--$\mathrm{K}$ scan and
(ii) transverse magnetic (TM) modes in $\Gamma$--$\mathrm{M}$ scan.
Figures~\ref{fig:transmission}(a) and (b-1) show the power transmission spectra for these incident waves entering into the metallic hexagonal and \kagome lattices, respectively.
The calculated eigenfrequencies are shown simultaneously as circles in Fig.~\ref{fig:transmission}.
We can see that the transmission dips form a band from $0.15$ to $0.3\,\U{THz}$.
The calculated eigenfrequencies are located around the experimental transmission dips.
Thus, the SUSY band correspondence for the second band is experimentally demonstrated.

The highest band for the metallic \kagome lattice can be observed for differently polarized incident waves.
Figure~\ref{fig:transmission}(b-2) shows the transmission spectra for the metallic \kagome lattice, where the incident waves are set as 
(i) TM modes in the $\Gamma$--$\mathrm{K}$ scan and 
(ii) TE modes in the $\Gamma$--$\mathrm{M}$ scan. 
In Fig.~\ref{fig:transmission}(b-2), we can see the flat band reported in Ref.~\onlinecite{Nakata2012}.
Note that the frequencies of the lowest band modes are under the light line, so it is impossible to excite them by free-space plane waves.
To excite the lowest band modes, another method, e.g., attenuated total reflection measurement, is needed \cite{Kajiwara2016}.

  \subsection{Discussion \label{sec:3.3}}
\begin{figure}[tp]
\includegraphics[width=86mm]{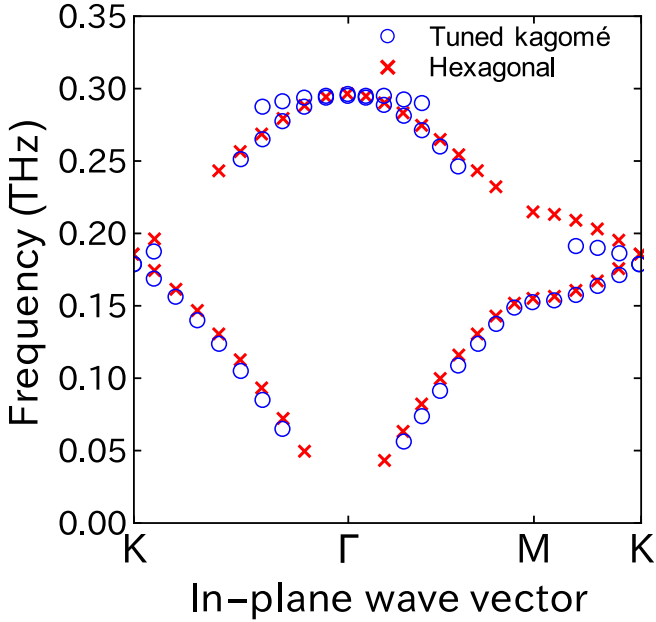}
\caption{\label{fig:tuned_kagome} (color online)
Comparison between eigenfrequencies for a tuned metallic \kagome lattice with $d=28\,\U{\mu m}$ and metallic hexagonal lattice with $d=10\,\U{\mu m}$. The other parameters are $r=150\,\U{\mu m}$, $b=800/\sqrt{3}\,\U{\mu m}$, and thickness $h=30\,\U{\mu m}$.
 }
\end{figure}

In the previous subsections,
the band correspondence between the metallic hexagonal and \kagome lattices was
demonstrated, but a $5\%$ discrepancy between the bands was also observed.
Here, we investigate the possibility to compensate empirically for the discrepancy.
We calculated the eigenfrequencies for the metallic \kagome lattice with the bar width $d=28\,\U{\mu m}$.
The other parameters are the same as the previous one.
The calculated results are shown in Fig.~\ref{fig:tuned_kagome} as circles.
To compare with the previous result, the eigenfrequencies for the metallic hexagonal lattice with $d=10\,\U{\mu m}$ are also plotted in Fig.~\ref{fig:tuned_kagome}.
We can see the improvement of band correspondence between the metallic hexagonal lattice and tuned metallic \kagome lattice.

 \section{Conclusion \label{sec:4}}

In this paper, we showed that the inductor-capacitor circuit networks on a simple regular graph and its line graph are related through SUSY, and their spectra must coincide (except possibly for the highest eigenfrequency).
The SUSY structure for the circuits was derived from the topological nature of the graphs.
To observe SUSY correspondence of the bands in the high frequency range, we investigated the metallic hexagonal and \kagome lattices.
The band correspondence between them was predicted by a simulation.
We performed terahertz time-domain spectroscopy for these metallic lattices and observed the band correspondence.
Finally, we proposed an empirical tuning method to reduce
the discrepancy of the corresponding bands of the metallic hexagonal and \kagome lattices.

The theoretical results are formulated  for the inductor-capacitor circuit networks
and independent of the implementations.
Therefore, our results is also applicable to transmission-line systems such as microstrip.
The SUSY correspondence in the spectra of inductor-capacitor circuit networks has the potential to extend mode filters \cite{Miri2013} and 
mode converters \cite{Heinrich2014} in two-dimensional (spoof) plasmonic systems.

\begin{acknowledgments}
The present research was supported by the JSPS KAKENHI Grant No. 25790065 and Grant-in-Aid for JSPS Fellows No. 13J04927.
Two of the authors (Y.~N. and Y.~U.) were supported by JSPS Research Fellowships for Young Scientists.
\end{acknowledgments}

\appendix

 \section{Condition for complete spectral coincidence of eigenvalues of $\mathsf{L}$ and $\mathsf{L}\sub{L}$\label{sec:appA}}
In this Appendix, we consider condition for complete spectral coincidence of the eigenvalues of $\mathsf{L}$ and $\mathsf{L}\sub{L}$.
Here, we mainly consider the finite graph cases.
At first, we introduce a bipartite graph.
A bipartite graph is a graph whose
vertex set can be separated into
two disjoint sets $\mathcal{V}_1$ and $\mathcal{V}_2$
such that every edge is connected between
a vertex in $\mathcal{V}_1$ and that in $\mathcal{V}_2$.
Using the concept of a bipartite graph,
we obtain the following lemma:
\begin{lem}
Consider a connected graph $G=(\mathcal{V}, \mathcal{E})$
satisfying $\#\mathcal{V}>0$ and $\#\mathcal{E}>0$.
Let $\bar{\mathsf X}$ be an undirected incidence matrix of $\bar{G}$.
In this case, $\rank \bar{\mathsf X}< \#\mathcal{V}$ is satisfied if and only if $G$ is 
bipartite.
\end{lem}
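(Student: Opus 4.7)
The plan is to translate the rank inequality into a kernel condition on $\bar{\mathsf X}^\mathrm{T}$ and then read bipartiteness (or its failure) directly from the defining relation $(\bar{\mathsf X}^\mathrm{T}\vct{x})_e = x_u + x_w$, where $u$ and $w$ are the endpoints of the edge $e\in\bar{\mathcal{E}}$. By rank-nullity, $\rank\bar{\mathsf X} = \#\mathcal{V} - \dim\ker\bar{\mathsf X}^\mathrm{T}$, so the condition $\rank\bar{\mathsf X}<\#\mathcal{V}$ is equivalent to the existence of a nonzero real vector $\vct{x}=[x_v]^\mathrm{T}$ with $x_u+x_w=0$ on every edge. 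From here the proof becomes an analysis of this simple pointwise relation.

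For the ``if'' direction, assume $G$ is bipartite with parts $\mathcal{V}_1$ and $\mathcal{V}_2$. Setting $x_v=+1$ for $v\in\mathcal{V}_1$ and $x_v=-1$ for $v\in\mathcal{V}_2$, every edge crosses the bipartition and contributes $x_u+x_w=0$. The hypothesis $\#\mathcal{E}>0$ forces both parts to be nonempty, so $\vct{x}\ne\vct{0}$ and we have a nonzero vector in $\ker\bar{\mathsf X}^\mathrm{T}$. For the ``only if'' direction, take a nonzero $\vct{x}$ satisfying $x_u+x_w=0$ along every edge and pick $v_0$ with $x_{v_0}\ne 0$. For any $v\in\mathcal{V}$ connectedness supplies a walk $v_0,v_1,\dots,v_k=v$; iterating $x_{v_{i+1}}=-x_{v_i}$ along it yields $x_v=(-1)^k x_{v_0}\ne 0$. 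Grouping vertices by the sign of $x_v$ produces two nonempty classes $\mathcal{V}_1$ and $\mathcal{V}_2$ such that every edge connects one to the other, exhibiting $G$ as bipartite.

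The only subtlety is the consistency of the sign assignment in the ``only if'' direction: two different walks from $v_0$ to the same $v$ must yield the same parity $(-1)^k$. This is not really an obstacle, since $\vct{x}$ is a single well-defined vector indexed by $\mathcal{V}$, so $x_v$ is intrinsically unambiguous; equivalently, every closed walk in $G$ must have even length, which is the standard combinatorial criterion for bipartiteness. Connectedness enters exactly at this step, propagating the nonzero value of $x_{v_0}$ throughout $\mathcal{V}$; without it the argument would have to be carried out componentwise, and one would also have to verify that a nonzero kernel vector can be localized to a single connected component.
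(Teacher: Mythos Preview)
Your proof is correct and follows essentially the same route as the paper's. The paper phrases the rank deficiency as a linear dependence $\sum_i c_i\bar{\vct{X}}_i=0$ among the row vectors of $\bar{\mathsf X}$, while you phrase it as a nonzero vector in $\ker\bar{\mathsf X}^{\mathrm T}$; these are literally the same condition, and both proofs then read off the edge relation $c_u+c_w=0$ (your $x_u+x_w=0$), propagate it along paths using connectedness to show all entries are nonzero, and split the vertex set by sign. The ``if'' directions also match: the paper's identity $\sum_{\mathcal V_1}\bar{\vct{X}}_i=\sum_{\mathcal V_2}\bar{\vct{X}}_i$ is exactly your $\pm1$ kernel vector written as a row relation.
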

\begin{proof}
$\Leftarrow$:
 If the graph is bipartite,
 $\mathcal{V}$ is represented by
 the disjoint union of $\mathcal{V}_1$ and $\mathcal{V}_2$
 as  $\mathcal{V}=\mathcal{V}_1 \sqcup \mathcal{V}_2$.
The $v_i$-component row vector of $\bar{\mathsf X}$ is
 denoted by $\bar{\vct{X}}_{i}$, where
 $v_i \in \mathcal{V}$ $(i=1,2,\cdots, n)$, and
$n=\#\mathcal{V}$.
 Without loss of generality, we can assume  
 $v_i \in \mathcal{V}_1$  $(i=1,2,\cdots, l)$ and
 $v_i\in \mathcal{V}_2$  $(i=l+1, l+2,\cdots, n)$.
Because the graph is bipartite,
$\sum_{i=1}^{l}\bar{\vct{X}}_{i}=\sum_{i=l+1}^{n}\bar{\vct{X}}_{i}$
 is satisfied.\\
 $\Rightarrow$:
 If we assume $\rank \bar{\mathsf X}\ne \#\mathcal{V}$,
\begin{equation}
\sum_{i=1}^{n}c_i \bar{\vct{X}}_{i}=0 \label{eq:19}
\end{equation} is satisfied for
 $c_i \in \mathbb{R}$ and not every $c_i=0$.
 At first we show $c_i\ne 0$ for all $i\in\{1,2,\cdots,n\}$.
We assume $c_{q}=0$ for $q \in \{1,2,\cdots,n\}$.
For arbitrary $r \in \{1,2,\cdots,n\}$,
we consider a path $(v_{f(1)}, e_{1}, v_{f(2)}, e_2, \cdots, e_{p}, v_{f(p+1)})$
 from $v_{q}$ to $v_{r}$, where $p$ is the path length, $f$ is a function
 from $\{1,2,\cdots, p+1\}$ to $\{1,2,\cdots, n\}$ satisfying
 $f(1)=q$ and $f(p+1)=r$, and
the edge $e_{s}$ connects vertices $v_{f(s)}$ and $v_{f(s+1)}$
 $(s=1,2,\cdots, p)$.
 Considering $e_{s}$-column of Eq.~(\ref{eq:19}),
 we have $c_{f(s)}=-c_{f(s+1)}$. Then,
 we obtain $c_i=0$ for all $i$. This leads a contradiction because we assumed not every $c_i=0$.
Therefore, we have $c_i\ne0$ for all $i$.
We assume $c_i>0$ $(i=1,2,\cdots, l)$, and
 $c_i<0$ $(i=l+1,l+2,\cdots, n)$, without loss of generality.
For a given arbitrary $e\in \mathcal{E}$,
we consider the column component about $e$ of Eq.~(\ref{eq:19}).
 Then, we find $e$ is connected between
 a vertex in $\mathcal{V}_1=\{v_1,v_2,\cdots,v_l\}$
 and that in $\mathcal{V}_2=\{v_{l+1},v_{l+2},\cdots,v_{n}\}$.
 This shows $G$ is bipartite.
 This proof is based on Ref.~\onlinecite{VanNuffelen1976}.
  \end{proof}
From this lemma, we can prove the following theorem:
\begin{theo}
 Consider  a simple $m$-regular connected graph $G=(\mathcal{V}, \mathcal{E})$ with $\#\mathcal{E}>0$.
There exists at least one mode with $\omega/\omega_0=\sqrt{2m}$
 for the inductor-capacitor circuit network on $G$
 if and only if $G$ is bipartite.
\end{theo}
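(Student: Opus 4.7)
The plan is to reduce the statement to a rank condition on $\bar{\mathsf{X}}$ and then invoke the preceding lemma. A mode with $\omega/\omega_0=\sqrt{2m}$ corresponds via Eq.~(\ref{eq:5}) to an eigenvector $\tilde{\vct{q}}$ of $\mathsf{L}$ with eigenvalue $2m$. Substituting the decomposition $\mathsf{L}=-\bar{\mathsf{X}}\bar{\mathsf{X}}^\mathrm{T}+2m\mathsf{I}$ from Eq.~(\ref{eq:8}), the eigenvalue condition becomes $\bar{\mathsf{X}}\bar{\mathsf{X}}^\mathrm{T}\tilde{\vct{q}}=\vct{0}$. So the question reduces to asking when $\bar{\mathsf{X}}\bar{\mathsf{X}}^\mathrm{T}$ has a nontrivial kernel.

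Next, I would observe that $\bar{\mathsf{X}}\bar{\mathsf{X}}^\mathrm{T}$ is real and positive semi-definite, so $\bar{\mathsf{X}}\bar{\mathsf{X}}^\mathrm{T}\tilde{\vct{q}}=\vct{0}$ is equivalent to $\|\bar{\mathsf{X}}^\mathrm{T}\tilde{\vct{q}}\|^2=\tilde{\vct{q}}^\mathrm{T}\bar{\mathsf{X}}\bar{\mathsf{X}}^\mathrm{T}\tilde{\vct{q}}=0$, hence to $\bar{\mathsf{X}}^\mathrm{T}\tilde{\vct{q}}=\vct{0}$. Existence of a nonzero such $\tilde{\vct{q}}$ is therefore equivalent to $\ker\bar{\mathsf{X}}^\mathrm{T}\neq\{\vct{0}\}$, which by the rank-nullity theorem is equivalent to $\rank\bar{\mathsf{X}}^\mathrm{T}<\#\mathcal{V}$, i.e.\ $\rank\bar{\mathsf{X}}<\#\mathcal{V}$.

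Finally, because $G$ is connected with $\#\mathcal{E}>0$ (and hence $\#\mathcal{V}>0$), the lemma of the appendix applies directly and identifies the condition $\rank\bar{\mathsf{X}}<\#\mathcal{V}$ with $G$ being bipartite. Chaining these equivalences yields the theorem. The argument is essentially bookkeeping once the lemma is granted; the only subtlety to flag is the PSD step that passes from $\bar{\mathsf{X}}\bar{\mathsf{X}}^\mathrm{T}\tilde{\vct{q}}=\vct{0}$ to $\bar{\mathsf{X}}^\mathrm{T}\tilde{\vct{q}}=\vct{0}$, since without it one would be tempted to look at $\ker\bar{\mathsf{X}}\bar{\mathsf{X}}^\mathrm{T}$ directly rather than $\ker\bar{\mathsf{X}}^\mathrm{T}$. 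No real obstacle remains: the heavy lifting (linking the rank deficiency of $\bar{\mathsf{X}}$ to bipartiteness) has already been carried out in the lemma.
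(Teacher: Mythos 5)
Your proposal is correct and follows essentially the same route as the paper: both reduce the eigenvalue-$2m$ condition via Eq.~(\ref{eq:8}) to the existence of a nonzero vector in $\ker\bar{\mathsf{X}}^\mathrm{T}$, apply rank–nullity, and invoke the lemma. Your explicit positive-semidefiniteness argument for passing from $\bar{\mathsf{X}}\bar{\mathsf{X}}^\mathrm{T}\tilde{\vct{q}}=\vct{0}$ to $\bar{\mathsf{X}}^\mathrm{T}\tilde{\vct{q}}=\vct{0}$ is a detail the paper leaves implicit, but it is the same proof.
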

\begin{proof}
 Let $\bar{\mathsf X}$ be an undirected incidence matrix of $\bar{G}$.
$G$ is bipartite $\Leftrightarrow$ $\rank{\bar{\mathsf X}}<\#\mathcal{V}$
 $\Leftrightarrow$ $\dim \ker{\bar{\mathsf{X}}^\mathrm{T}}=\#\mathcal{V}-\rank{\bar{\mathsf X}}>0$  $\Leftrightarrow$
  $^\exists \vct{x}$ satisfying $\bar{\mathsf{X}}^\mathrm{T}\vct{x}=0$
 $\Leftrightarrow$
 $^\exists \vct{x}$ satisfying $\mathsf{L}\vct{x}=2m\vct{x}$.
 \end{proof}
 On the other hand, we can formulate the condition
 for presence of eigenvalue $E=2m$ of $\mathsf{L}\sub{L}$ as follows: 
 \begin{theo}
Consider a simple $m$-regular graph $G=(\mathcal{V}, \mathcal{E})$ with $\#\mathcal{V}>0$.
There exists at least one mode with $\omega/\omega_0=\sqrt{2m}$ for the inductor-capacitor circuit network on $L(G)$
 if $m> 2$.
 \end{theo}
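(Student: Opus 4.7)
The plan is to translate the statement $\omega/\omega_0 = \sqrt{2m}$ into a linear-algebra statement about the kernel of $\bar{\mathsf{X}}$, and then use simple dimension counting based on regularity. The key observation is that combining Eq.~(\ref{eq:12}) with Eq.~(\ref{eq:9}), which gives $m\sub{L}+2 = 2m$, yields
\begin{equation*}
\mathsf{L}\sub{L} = -\bar{\mathsf{X}}^\mathrm{T}\bar{\mathsf{X}} + 2m\, \mathsf{I}\sub{L}.
\end{equation*}
Hence $E = 2m$ is an eigenvalue of $\mathsf{L}\sub{L}$ (with eigenvector $\vct{y} \in \mathbb{R}^{\#\mathcal{E}}$) if and only if $\bar{\mathsf{X}}^\mathrm{T}\bar{\mathsf{X}}\vct{y} = \vct{0}$. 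Since $\bar{\mathsf{X}}^\mathrm{T}\bar{\mathsf{X}}$ is positive semidefinite, its kernel coincides with $\ker \bar{\mathsf{X}}$, so the theorem reduces to proving $\ker \bar{\mathsf{X}} \neq \{\vct{0}\}$ whenever $m > 2$.

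To prove the latter, I would use the rank-nullity theorem applied to $\bar{\mathsf{X}}$ viewed as a map from $\mathbb{R}^{\#\mathcal{E}}$ to $\mathbb{R}^{\#\mathcal{V}}$. Then
\begin{equation*}
\dim \ker \bar{\mathsf{X}} \;=\; \#\mathcal{E} - \rank \bar{\mathsf{X}} \;\geq\; \#\mathcal{E} - \#\mathcal{V}.
\end{equation*}
Next I invoke the counting identity $\#\mathcal{E} = (m/2)\#\mathcal{V}$ already noted in the paper for finite $m$-regular graphs. This gives
\begin{equation*}
\dim \ker \bar{\mathsf{X}} \;\geq\; \bigl(\tfrac{m}{2} - 1\bigr)\#\mathcal{V},
\end{equation*}
which is strictly positive when $m > 2$ and $\#\mathcal{V} > 0$. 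A nonzero $\vct{y} \in \ker \bar{\mathsf{X}}$ then satisfies $\mathsf{L}\sub{L}\vct{y} = 2m\vct{y}$, i.e., it is a mode of the inductor-capacitor circuit on $L(G)$ with $\omega/\omega_0 = \sqrt{2m}$.

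There is no substantial obstacle here; the argument is essentially a dimension count once the two algebraic identities (Eq.~(\ref{eq:12}) rewritten via Eq.~(\ref{eq:9}), and $\#\mathcal{E} = (m/2)\#\mathcal{V}$ from $m$-regularity) are combined. The only point to be careful about is that the proof implicitly relies on finiteness of $\mathcal{V}$ and $\mathcal{E}$ so that rank-nullity applies in the stated form; this is consistent with the appendix's declaration that it mainly treats finite graphs. Notice also that, in contrast to Theorem~1, no bipartiteness hypothesis is needed: regularity with $m > 2$ alone forces $\bar{\mathsf{X}}$ to be a ``wide'' matrix with more edge-columns than vertex-rows, guaranteeing a nontrivial kernel.
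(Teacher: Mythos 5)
Your proposal is correct and follows essentially the same route as the paper's proof: a rank--nullity count $\dim\ker\bar{\mathsf{X}} = \#\mathcal{E}-\rank\bar{\mathsf{X}} \geq \#\mathcal{E}-\#\mathcal{V} = \tfrac{m-2}{2}\#\mathcal{V} > 0$ for $m>2$, followed by the identification of a nonzero kernel vector of $\bar{\mathsf{X}}$ with an eigenvector of $\mathsf{L}\sub{L}$ at eigenvalue $2m$. Your version merely spells out more explicitly the intermediate step $\ker(\bar{\mathsf{X}}^{\mathrm{T}}\bar{\mathsf{X}})=\ker\bar{\mathsf{X}}$ via positive semidefiniteness, which the paper leaves implicit.
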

\begin{proof}
 Let $\bar{\mathsf X}$ be an undirected incidence matrix of $\bar{G}$.
We have $\dim \ker{\bar{\mathsf X}}=\#\mathcal{E}-\rank{\bar{\mathsf{X}}}\geq \#\mathcal{E}-\#\mathcal{V}=\frac{m-2}{2}\#\mathcal{V}>0$ for $m>2$.
 Then, $^\exists \vct{x}$ satisfying $\bar{\mathsf{X}}\vct{x}=0$.
 Finally,  $^\exists \vct{x}$ satisfying $\mathsf{L}\sub{L}\vct{x}=2m\vct{x}$.
 \end{proof}
From these theorems,
 all spectra of the inductor-capacitor circuit networks on
 a simple $m$-regular ($m>2$) connected bipartite graph and its line graph
 completely coincide.
On the other hand,
we find that $\mathcal{L}\sub{c}$ have a non-degenerated
eigenvector with eigenvalue $E=2m$
for a simple $m$-regular ($m>2$) connected non-bipartite graph $G$
(e.g.  Sec.~\ref{subsub:1}).
For $m=2$, all spectra of the inductor-capacitor circuit networks on a
simple connected $m$-regular graph $G$ and its line graph $L(G)$ coincide
because of $G=L(G)$. In this case,
there are $\omega/\omega_0=\sqrt{2m}$ modes if and only if $\#\mathcal{V}$ is even.

 To analyze an infinite periodic graph with lattice vectors $\{\thrvct{a}_i\}$,
 we take a supercell spanned by $\{N_i \thrvct{a}_i\}$, $N_i>1$.
 We impose a periodic boundary condition (called Born--von Karman boundary condition)
 on the sides of the supercell.
Note that this boundary condition just leads to discretization of wave vectors in the Brillouin zone.
Now, the infinite graph is reduced to a finite graph and
we can use the theorems.
For example, the hexagonal lattice is a simple 3-regular connected bipartite graph.
Then, the spectra of the inductor-capacitor circuit networks on hexagonal and \kagome lattices
include $\omega/\omega_0=\sqrt{6}$, simultaneously.
Note that the spectra of the inductor-capacitor circuit networks on hexagonal and \kagome lattices completely coincide,
but their dispersion relations do not.

 \section{Detailed theoretical model for the metallic hexagonal and \kagome lattices \label{sec:appB}}
In this appendix, we derive detailed theoretical models for fitting the eigenfrequencies of the metallic hexagonal and \kagome lattices.
For the metallic hexagonal and \kagome lattices, the circuit models treated in Sec.~\ref{sec:2} are approximately valid.
To improve the model accuracy, we have to take into account capacitive couplings between disks.
Considering only nearest neighboring couplings,
we modify Eq.~(\ref{eq:4}) as $\mathsf{P}=C^{-1}( \mathsf{I} + \eta \mathsf{A})$.
Here, $C^{-1} \eta \mathsf{A}$ represents
the capacitive coupling between the adjacent disks \cite{Kajiwara2016}.
Then, we obtain
$$ \frac{\omega}{\omega_0}=\sqrt{ \big[m-\alpha(\thrvct{k}_\parallel)\big]\big[1+\eta \alpha(\thrvct{k}_\parallel)\big] }$$
for the metallic hexagonal lattice and
$$ \frac{\omega}{\omega_0}=\sqrt{ \big[m\sub{L}-\alpha\sub{L}(\thrvct{k}_\parallel)\big]\big[1+\eta \alpha\sub{L}(\thrvct{k}_\parallel)\big] }$$
for the metallic \kagome lattice.
Generally, the coupling constant $\eta$ depends on $\omega$ as $\eta=\eta_0 \exp[\ii (\omega/c) b]$, where $b$ is the distance between the nearest disks \cite{Yeung2011}.
The imaginary part of $\eta$ represents the resistive component.
If we only focus on the real part of the eigenfrequencies, we may ignore the resistive term (note that we have already ignored the imaginary part of $L$ and $C$).
Then, we assume $\eta=\eta_0 \cos[(\omega/c) b]$.
Using the real part of the eigenvalues calculated by simulation, we numerically minimize the error
$$\mathrm{err}\sur{hex}(\omega_0, \eta_0)=\sum_{(\omega,\thrvct{k}_\parallel) \in \text{\{data points\}} }g\big(\omega_0, \eta_0;\omega, m, \alpha(\thrvct{k}_\parallel)\big)^2$$ 
for the hexagonal lattice and 
$$\mathrm{err}\sur{kag}(\omega_0, \eta_0)=\sum_{(\omega,\thrvct{k}_\parallel) \in \text{\{data points\}} }g\big(\omega_0, \eta_0;\omega, m\sub{L}, \alpha\sub{L}(\thrvct{k}_\parallel)\big)^2$$ 
for the \kagome lattice, where we define 
$$g(\omega_0, \eta_0;\omega, m, \alpha)=\omega- \omega_0\sqrt{\left[m-\alpha\right]\left[1+\alpha \eta_0 \cos\left(\frac{\omega b}{c}\right)\right]}.$$
The obtained fitting parameters are as follows: $\omega_0=\omega_0\sur{hex}=2\pi \times 0.107\,\U{THz}$, $\eta_0=\eta\sur{hex}_0=0.0916$ for the hexagonal lattice, and $\omega_0=\omega_0\sur{kag}=2\pi\times 0.101\,\U{THz}$ and $\eta_0=\eta\sur{kag}_0=0.142$ for the \kagome lattice.
Because the magnetic coupling and higher order effects (beyond the nearest capacitive coupling) are included in these parameters, the parameters for the hexagonal and \kagome lattice can be different.
The dispersion curves with these fitting parameters are shown in Fig.~\ref{fig:hex_kagome_bands}.
These curves agree with the simulated data despite the simplicity of the model.

\nocite{*}

%

\end{document}